\documentclass{article}

\usepackage{arxiv}

\usepackage[utf8]{inputenc} 
\usepackage[T1]{fontenc}    
\usepackage{hyperref}       
\usepackage{url}            
\usepackage{booktabs}       
\usepackage{amsfonts}       
\usepackage{nicefrac}       
\usepackage{microtype}      
\usepackage{lipsum}
\usepackage{graphicx}
\graphicspath{ {./images/} }
\usepackage{algorithm}
\usepackage{algorithmic}
\usepackage{amsmath}
\usepackage{xcolor}
\usepackage{amsfonts}
\usepackage{bm}
\usepackage{newfloat}
\usepackage{listings}
\usepackage{amssymb}
\usepackage{natbib}
\floatstyle{ruled}
\newfloat{listing}{tb}{lst}{}
\floatname{listing}{Listing}

\usepackage{booktabs}
\usepackage{placeins}
\usepackage{amsthm}

\newtheorem{theorem}{Theorem}

\theoremstyle{definition}

\theoremstyle{remark}

\title{History Matters: Meta-policy Delegation with Heterogeneous Multi-agent Reinforcement Learning}

\author{
 Ziqing Lu\thanks{These authors contributed equally to this work.}\\
  Department of Applied Mathematics and Computational Science\\
  University of Iowa\\
  \texttt{luz@wfu.edu} \\
   \And
 Avinash Reddy Mudireddy$^*$ \\
  Department of Electrical and Computer Engineering\\
  University of Iowa\\
  \texttt{avinashreddy-mudireddy@uiowa.edu}
  \And
 Sarra Alqahtani \\
  Department of Computer Science\\
  Wake Forest University\\
  \texttt{sarra-alqahtani@wfu.edu}
   \AND
    Weiyu Xu \\
  Department of Electrical and Computer Engineering \\
  University of Iowa\\
   \texttt{weiyu-xu@uiowa.edu} \\
}

\begin{document}
\maketitle
\begin{abstract}
AI agents are expected to play an increasingly important role in future decision-making systems. In this paper, we consider collaborative systems composed of heterogeneous multi-agent systems (MAS), where their members have different capabilities and operating costs. We study how agents can delegate tasks to one another so that certain research tasks can be completed effectively under resource-constrained scenarios. We first develop a multi-agent reinforcement learning-based (MARL) delegation training that enables agents to make sequential delegation decisions while minimizing the total execution cost. We then extend this approach to MARL with prescribed delegation topologies. Furthermore, we introduce two new frameworks for collaboration and delegation in multi-agent systems. The first framework proposes that an agent's policy depends not only on the current state of the underlying Markov decision process but also on the interaction history, including previous joint actions. This history-dependent formulation can improve coordination even in fully observable environments, where conventional MARL methods typically restrict policies to depend only on the current state. The second framework proposes a novel, potentially multi-dimensional monetary mechanism to facilitate the collaboration and delegation for MAS. 
\end{abstract}


\section{Introduction}
AI agents have emerged as a promising approach for solving complex, multi-step problems. Large language models (LLMs) often serve as the reasoning engines of these agents, enabling them to interpret tasks, generate plans, use external tools, and coordinate specialized models or other agents. As agentic systems become more capable, they are also increasingly likely to contain multiple models with different capabilities, ownership relationships, and operating costs.

Despite their potential, deploying LLM-based agents can be expensive. Reliance on API-based models may incur substantial and unpredictable costs due to massive token usage. Using a premium model for every decision is often unnecessary: Smaller or less expensive models may be sufficient for routine tasks, while stronger models may be needed only for difficult or high-consequence decisions. Thus, an effective multi-agent system must determine both how to solve a task and which agent should execute each step. 

Existing work addresses this cost-performance trade-off through model routing, multi-agent orchestration, and MARL-based collaboration. FrugalGPT, RouteLLM, and AutoMix select or cascade models according to expected quality, task difficulty, and invocation cost \citep{chen2023frugalgpt,ong2025routellm,aggarwal2024automix}, while Router-R1 and MasRouter use learned controllers to make sequential routing, role-assignment, and model-selection decisions \citep{zhang2025routerr1,yue2025masrouter}. MAGRPO and CoLLM instead apply MARL to improve the task-level policies of LLM agents operating within predefined collaboration protocols \citep{liu2025magrpo,liu2026collm}. Graph-based frameworks such as GPTSwarm and G-Designer optimize computational or communication connections among agents \citep{zhuge2024gptswarm,zhang2025gdesigner}. 

Our focus is complementary but distinct: we learn a delegation layer over agents that already possess pretrained task policies. Delegation is an explicit agent action rather than a decision imposed entirely by a centralized router. 
This formulation allows us to study which agent should execute, who may delegate to whom, and how execution decisions depend on agent costs and prior interactions. Accordingly, we formulate task delegation among LLM-based agents as a MARL problem, enabling the delegation layer to adaptively and efficiently assign each task to the most appropriate agent, which then executes its own pretrained policy.

Delegation also raises questions on agents' willingness to collaborate. Whether an agent accepts a costly task may depend on whether another agent previously cooperated, fulfilled a delegation, or provided compensation, rather than because of the current physical state alone. Although histories are commonly used in partially observable environments to infer hidden state, we consider a \textbf{different role} for history: past joint actions and inter-agent interactions are strategically relevant even when the environmental state is fully observable. Such histories can support reciprocity, trust, and intertemporal incentives that may not be expressible by policies restricted to the current Markov state that were traditionally considered in fully observable MGs \cite{robust-marl,zero-sum-game,bai2020provable}.

Thus it is natural to ask ``Which agents should be trusted to perform the tasks using their own pre-trained policies?''  A related question is that ``When should agents trust their peers and delegate the work of solving the tasks to their peers?'' In this paper, we investigate how agents should delegate tasks to other agents in a collaborating environment. Our contributions include the following:

1. Our work formulates delegation itself as a sequential multi-agent decision problem. Agents with heterogeneous pretrained capabilities choose whether to execute, delegate, or remain inactive; We propose a formulation in which AI agents delegate tasks to their peers subject to management and delegation constraints. The framework is designed to reduce operational costs and may also support the development of safer and more reliable multi-agent systems \cite{oversight_game}.

2. We introduce a history-dependent policy formulation tailored to collaboration and delegation. The policies may depend on the public interaction history, including previous joint actions and states, rather than only on the current environmental state. We show that conditioning on previous behavior can support reciprocal collaboration outcomes.

3. To make history-dependent policies practical in MARL, we propose a virtual money-transfer framework that supports collaboration and delegation. An agent's money balance provides a compact summary of its past interactions, including previous actions and transfers. The virtual money is separate from the environmental rewards that measure the agent's actual task performance. By tracking the balances and constraining transfers, the system can encourage collaborations that lead to higher environmental rewards across time steps.

\textbf{Related works}: 
\cite{oversight_game} formulates an oversight game in which human supervisors decide when to intervene and when to rely on a supervised agent’s pretrained policy during MARL training. Its human–agent delegation mechanism is designed for safety guarantees. In contrast, our work studies delegation among agents as a cost-allocation problem, with the primary objective of minimizing execution costs while maintaining effective task performance.  

\cite{action_trading} studies action trading, where selfish agents can trade their actions in exchange for reward, and shows empirically that such action trading can increase the overall reward. Our money transfer framework differs in two aspects: First, the virtual money is separate from environmental rewards and records past interactions. This distinction is useful when task-level rewards are not transferable, while an internal budget can still be exchanged to facilitate delegation. Second, payments depend on the joint action, rather than only on the receiver's individual action. 

The formulation of our transferable money system is also different from the series of peer-incentivation (PI) works because our formulation treats money as a component of the state variable, while PI type of works, such as \textit{DRIVE} \cite{DRIVE} and \textit{LIO} \cite{LIO}, usually treat incentives primarily as reward-shaping signals used for policy learning. 

A line of reputation-based MARL research, including \cite{anastassacos2021cooperationreputationdynamicsreinforcement, ren2025bottomupreputationpromotescooperation}, also incorporates the effect from past interactions. However, the information resulted by past interactions in those works is either formulated as an explicit social norm or a learned social evaluation, and the resulting reputation score directly reshapes its environmental reward. In contrast, our money variable is neither a judgment of an agent’s social standing nor a reward-shaping term. It is a transferable and budget-constrained balance. For more related works in 
routing across LLMs and MARL for LLM collaboration, please see Appendix \ref{app:related_works}.

\section{Problem Formulation}
Consider a collaborative Markov game (MG) with $n$ agents: 
\begin{align}\label{model:delegation-game}
    \mathcal{M}=\{\mathcal{N},\mathcal{S},\mathcal{A}, P, R, \gamma, \{\mu_i\}_{1=1}^n, T, K\},
\end{align}
 where $\mathcal{N}$ is the number of agents with $|\mathcal{N}|=n$, $\mathcal{S}$ is the shared state space, $\mathcal{A}=\mathcal{A}_1\times\dots\times\mathcal{A}_n$ is the joint agent-group action space, $P:\mathcal{S}\times\mathcal{A}\times\mathcal{S}\rightarrow [0,1]$ is the transition probability function that depends on the joint action of the group, $\gamma\in[0,1)$ is the discount factor, $T$ is the number of steps in each episode and $K$ is the number of episodes. Each agent is individually pre-trained in the same environment and equipped with its own learned policy $\mu_{i}:\mathcal{S}\times\mathcal{A}_i\rightarrow[0,1],i=1\dots n$. The stronger the agent $i$ is in intelligence, the better its pre-learned policy $\mu_i$.
 
We consider this collaborative MG built on top of these pre-trained policies. This MG can be viewed as an extra ``wrapper'' layer between the pre-training of LLM agents and the deployment of these agents. The output of this MARL interface is $\bm{\pi}:\mathcal{S}\times\mathcal{A}\rightarrow [0,1]$, which is the joint group policy used to delegate tasks to specific agents. Notice that the delegation policy $\bm{\pi}$ is independent of the pre-trained policies $\{\mu_i\}_{i=1}^n$. 

Define each agent's one-step reward as $R_i:\mathcal{S}\times\mathcal{A}_i\rightarrow \mathbb{R}$. Then the collaborative one-step reward is $R(s_t,\bm{a}_t)=\sum_{i\in\mathcal{N}}R_i(s_t,\bm{a}_t)$. The collaborating group aims to maximize the objective function
 \begin{align}\label{model:objective}     J(\bm{\pi})=\mathbb{E}_{\bm{\pi},P}\Big[\sum_{t=1}^T\gamma^{t-1}\sum_{i\in\mathcal{N}}R_i(s_t,\bm{a}_t)\Big].
 \end{align}

On top of the collaborative game, we further introduce different delegation structures that are applicable in different scenarios. First, the \textbf{management constraints}, meaning that some agents have ownership of or control over the actions of fellow agents. Second, the \textbf{delegation constraints}, which generalize management constraints by allowing any agent to delegate tasks to any other agent. Third, the most general and realistic setting is \textbf{money trading}, in which agents incentivize others to undertake delegated tasks by allocating a portion of their held money as payment. 

\subsection{Management Constraints}\label{subsec:manage}
In the simplest case, assume that the stronger agent has one-way control over its fellow agent, and thus can mandatorily delegate work through this management structure. We take a $3$ agent system as an example. Name the three agents $A, B$, and $C$. Suppose that agent $A$ has control over agent $B$, and agent $B$ has control over agent $C$. Although $A$($B$) is a more capable agent than $B$($C$), $B$($C$) is more cost-efficient in simple tasks than its corresponding superior. Therefore, when dealing with simple tasks, superiors have incentives to delegate tasks to their fellow agents. Regardless of whether each agent chooses to delegate or adopt its pre-trained action, there is only one agent actually executing at each step.

We assume that agents have heterogeneous capabilities and execution costs. At each time step, the current state contains sufficient information for agents to identify the task. Each agent knows everybody's capabilities and costs for the given task. Thus, before choosing whether to execute or delegate, agent $i$ can determine which agents are able to complete the task and can compare their expected performance and execution costs. In this formulation, we adopt the full-information assumption and treat these profiles as known to all agents.

We formulate the collaborative delegation game as follows $\{\{A,B,C\}, \mathcal{S},\mathcal{A},R,P,\gamma, \{\mu_A, \mu_B, \mu_C\}\}$: The shared state space $\mathcal{S}$ is the same state space where agents were previously trained. For every agent $i$, its action space is augmented by its own delegation action space and the no-operation action, 
$\mathcal{A}_i'= \mathcal{A}_i\cup\mathcal{D}_i\cup\{\operatorname{no-op}\}$, with $\mathcal{D}_i$ defined as
$$\mathcal{D}_A=\{d_{A\rightarrow B}\}, \mathcal{D}_B=\{d_{B\rightarrow C}\},\mathcal{D}_C=\emptyset.$$
Each delegation action $d_{i\rightarrow j}$ means the agent $i$ delegates the tasks to agent $j$.
     The expanded joint action space is the product of expanded action spaces  $\mathcal{A}=\Pi_{i=A,B,C}\ \mathcal{A}_i'\cup\mathcal{D}_i\cup\{\operatorname{no-op}\}$. The transition probability function, discount factor are defined the same way as those of (\ref{model:delegation-game}).
 

$\{\mu_A,\mu_B,\mu_C\}$ are the pre-learned policies. For example, for each LLM agent, $\mu_i(\tilde{a}_i|s)$ denotes its fixed task-execution policy. This policy is induced by the agent's underlying pre-trained language model and available tools. Given a task state or context $s$, $\mu_i$ determines a distribution over the responses or environmental actions available to agent $i$. 

 \textbf{Sequential Execution Rules}: At each time step $t$, there is only one agent would actually execute the delegated task.
    If agent $i$ selects to perform, its action is sampled from its own pre-trained policy $a_{\mu_i}\sim \mu_i$. The execution action $a_{exec}$ of the group is chosen in the following way:
    \begin{align*}
        a_{exec}=\begin{cases} a_{\mu_A}, \text{ if }\bm{a}=(a_{\mu_A},\cdot,\cdot),\\
        a_{\mu_B}, \text{ if } \bm{a}=(d_{A\rightarrow B},a_{\mu_B},\cdot),\\
        a_{\mu_C}, \text{ if } \bm{a}=(d_{A\rightarrow B},d_{B\rightarrow C},a_{\mu_C}).
        \end{cases}
    \end{align*}
    Here ``$\cdot$'' means any action of the corresponding agent. This execution rule follows exactly the hierarchical relations between agents $A$, $B$ and $C$.

    The reward function of each agent $R_i$ consists of two parts: the environmental reward $\bar{R}$ and the individual execution or delegation cost $\mathcal{C}_i$. 
    The environmental reward $\bar{R}:\mathcal{S}\times\mathcal{A}_{exec}\rightarrow\mathbb{R}$ is generated in correspondence to the executed action $a_{exec}$, and is the same for any agent $i$ in the group. 
    
    The other part of the reward function is the individual execution or delegation cost $\mathcal{C}_i:\mathcal{A}_i\cup\mathcal{D}_i\cup\{\operatorname{no-op}\}\rightarrow\mathbb{R}.$ If agent $i$ chooses to execute, then the execution cost is generated according to the specific execution cost $\mathcal{C}_i(a_{\mu_i})$ (For example, the execution cost of each agent can be represented by the API usage cost of the corresponding LLM.);
    If agent $i$ chooses to delegate the task, there is a relatively smaller delegation cost $\mathcal{C}_i(d_{i\rightarrow j})$; If $\operatorname{no-op}$ is chosen, $\mathcal{C}_i(\operatorname{no-op})=0,\forall i\in\mathcal{N}$.
 
 Therefore, one-step reward function of each agent $i$ is formulated as 
$$R_i(s,\bm{a})=\alpha\bar{R}(s,a_{exec})-\beta\mathcal{C}_i(a_{i}),$$
where $\alpha$ and $\beta$ are scaling factors, since the environmental reward and the money cost differ in units. The objective of the agent group is still to maximize (\ref{model:objective}).

\subsection{Delegation Constraints}\label{subsec:delegate}
In a more general case, when delegation is assigned in interactive ways between every two pairs of agents, delegation constraints can be represented through a directed graph. For example, an arrow from Agent $A$ to Agent $B$ represents $A$ choosing to delegate to $B$, and vice versa. Compared to subsection \ref{subsec:manage}, we omit the strict hierarchical relation in delegation rules.


With the same three-agent example as in \ref{subsec:manage}, we let all other settings be the same but modify the action spaces and the execution rules 

The individual delegation action space $\mathcal{D}_i$ of each agent $i$ is now expanded, and any agent can delegate the task to any other agent in the group.
    $$\mathcal{D}_{A}=\{d_{A\rightarrow B}, d_{A\rightarrow C}\}, \mathcal{D}_{B}=\{d_{B\rightarrow A}, d_{B\rightarrow C}\}, $$
    $$\mathcal{D}_{C}=\{d_{C\rightarrow A}, d_{C\rightarrow B}\}$$
   \textbf{Networked Execution Rules}: Multiple agents can execute at the same time. The execution action is now a $3$-element vector $\bm{a}$, with each element indicating whether and with what action an agent is executing. Here $(\bm{a}_{\mathrm{exec}})_i$ denotes the $i$-th element of the execution action.
    \begin{align*}
     (\bm{a}_{\mathrm{exec}})_i = a_{\mu_i}, \text{if} \begin{cases}
           \exists\, j\neq i:\ a_j=d_{j\rightarrow i}
       \ \\
        a_i\sim \mu_i.
       \end{cases}
    \end{align*}
    This execution rule means that an agent performs if and only if some other agent delegates the task to it and it does not delegate to any other agent. For initialization, at the beginning of every episode, an oracle delegates the task to a selected subset of agents. The individual reward remains the same as that of \ref{subsec:manage}, and the group objective is the same as (\ref{model:objective}).

In realistic human societies, whether a delegation relationship can be formed often depends on an individual's willingness to accept a costly task. To capture this feature, we allow each agent's willingness to accept such a task to depend on its previous interactions with other agents. In particular, an agent may be more willing to trust a peer and accept a delegated task if they have chosen collaborative joint actions in the past. We therefore model each agent's policy as depending on the history of past states and joint actions, rather than only on the current state. 

\subsection{Policies Dependent on Past Interactions}\label{subsec:history-dependent policies}
In this subsection, we consider a class of history-dependent policies that differs from the conventional Markov policies commonly used in the MARL literature \cite{robust-marl,zero-sum-game,bai2020provable}. Under a Markov policy, each agent selects its action based only on their current state. By contrast, under a history-dependent policy, an agent may condition its action on the entire past interactions of the game.

Consider a finite dynamic game with $n$ participants
$\mathcal{G}=\{\mathcal{S},\mathcal{A}_1\times \cdots \times \mathcal{A}_n,P,T,r\}.$ Let $x_t\in \mathcal{S}$ denote the state at time $t$, and let $\bm{a}_t=(a_t^1,\dots,a_t^n)$ denote the joint action at time $t$, $1\leq t\leq T$. We define the public history available at time $t$ by
$h_t=(x_1,\bm{a}_1,x_2,\bm{a}_2,\dots,x_{t-1},\bm{a}_{t-1},x_t).$ The public history $h_t$ contains all past states, all past joint actions, and the current state. For each agent $i$, a deterministic history-dependent policy profile is a sequence of policies $\{\pi_t^i\}_{t=1}^{T}$, where each history dependent policy maps from the public history up to time $t$ to the action space of agent $i$, $\pi_t^i:\mathcal{H}_t\rightarrow \mathcal{A}_i,\  t=1,\dots,T,$ where $\mathcal{H}_t$ denotes the set of histories up to time $t$. A Markov policy is a special case of history-dependent policies, where the policy depends only on the current state and is irrelevant to past trajectories: $\pi_t^i(h_t)=\mu_t^i(x_t).$

The key advantage of a history-dependent policy is that it allows agents to reward, punish, or condition their actions based on the previous behavior of other agents. The following example illustrates why this additional dependence on history is more realistic and can even incentivize equilibrium policies with higher returns. Notice that the regular Markov policy space is a subspace of the history-dependent policy space, so the return of a history-dependent policy is at least as good as that of a Markov policy.  

\textbf{Motivating example}: 
Consider a two-agent dynamic game with two agents $A$ and $B$, with two time steps $T=2$, and with a state space $\mathcal{S}=\{s_1,s_2\}$. The state at time $t$ is denoted by $x_t$. The action spaces are $\mathcal{A}_1=\{a_{A,1},a_{A,2}\},\ 
\mathcal{A}_2=\{a_{B,1},a_{B,2}\}$. The joint action is written as $\bm{a}_t=(a_t^A,a_t^B)$.

At the first time step, agent $A$'s action determines the reward:
\begin{align*}
    r_1(x_1, \bm{a}_1)=
\begin{cases}
(0,0), & \text{if } a_1^A=a_{A,1},\\
(-0.1,1), & \text{if } a_1^A=a_{A,2},
\end{cases}
\end{align*}
where $a_{A,2}$ is costly for agent $A$, but beneficial for agent $B$.

At the second time step, agent $B$'s action determines the reward:
\begin{align*}
    r_2(x_2,\bm{a}
    _2)=
\begin{cases}
(0,0), & \text{if } x_2=s_1 \text{ and } a_2^B=a_{B,1},\\
(9,0), & \text{if } x_2=s_1 \text{ and } a_2^B=a_{B,2},\\
(0,10), & \text{if } x_2=s_2 \text{ and } a_2^B=a_{B,1},\\
(9,10), & \text{if } x_2=s_2 \text{ and } a_2^B=a_{B,2}
\end{cases}
\end{align*}
where $a_{B,2}$ benefits agent $A$, while agent $B$'s own second-step reward is independent of whether it chooses $a_{B,1}$ or $a_{B,2}$. The state transition is given by $\mathbb{P}(x_2=s_2\mid x_1=s_1,\bm{a}_1)=0.5,
\mathbb{P}(x_2=s_1\mid x_1=s_1,\bm{a}_1)=0.5, \forall \bm{a}_1\in\mathcal{A}$.

\textit{History-dependent Nash equilibrium.} Under the history-dependent policy formulation, the following policy profile is one deterministic Nash equilibrium:
\begin{align*}
    &\pi_1^A(h_1)=a_{A,2},
\qquad
\pi_2^A(h_2)\text{ any action in }\mathcal{A}_1,\\
&\pi_1^B(h_1)\text{ any action in } \mathcal{A}_2, \pi_2^B(h_2)=
\begin{cases}
a_{B,1}, & \text{if } a_1^A=a_{A,1},\\
a_{B,2}, & \text{if } a_1^A=a_{A,2}.
\end{cases}
\end{align*}
This policy is history-dependent because agent $B$'s second-period action depends on the past joint action $\bm{a}_1$, not only on the current state $x_2$.

Under this policy profile, agent $A$ chooses the costly action $a_{A,2}$ at the first time step. In response, agent $B$ chooses $a_{B,2}$ at the second time step, which rewards agent A. The expected returns are
$V_0^A=-0.1+0.5(9+9)=8.9,$ and $V_0^B=1+0.5(0+10)=6.$

\textit{Markov Nash equilibrium.}
Now suppose that agents are restricted to Markov policies. Then agent B's second-period policy can depend only on the current state $x_2$, but not on agent A's first-period action. Therefore, the history-dependent reward mechanism described above is no longer available.

An example Markov equilibrium is given by 
\begin{align*}
    &\pi_1^A(x_1)=a_{A,1},
\qquad
\pi_2^A(x_2)\text{ can be any action in }\mathcal{A}_1,\\
&\pi_1^B(x_1)\text{ can be any action } \mathcal{A}_2,
\qquad
\pi_2^B(x_2)=a_{B,1}.
\end{align*}

Under this Markov policy profile, agent $A$ does not choose the costly first-period action, since agent $B$'s future action cannot be conditioned on whether A previously helped $B$. The expected returns are $V_0^A=0+0.5(0+0)=0,$ and $V_0^B=0+0.5(0+10)=5.$ In fact, we cannot find any equilibrium for history-independent policies where agent $B$ achieves a reward bigger than $6$.

We numerically verified that among all the Markov deterministic equilibria policies, there are $3$ possible outcomes in terms of value: $\{(0,0,5.0),(4.5, 5.0), (9.0, 5.0)\}$, where within each tuple, the first element is the return of agent $A$ and the second element is the return of agent $B$. Meanwhile, among all the history-dependent deterministic equilibria, there are $5$ possible outcomes:
$\{(0.0,5.0), (4.4,6.0),(4.5,5.0),(8.9,6.0),(9.0,5.0)\}$.

This example illustrates that history-dependent policies can support reciprocity mechanisms that are not expressible by Markov policies. Since future actions can depend on past joint actions, one agent may be incentivized to take a costly action early in the game if another agent can condition its later behavior on that action. In this sense, history-dependent policies may support equilibria with higher returns than those obtained under the Markov policy restriction.

\textbf{Algorithm}: We develop a \textit{Two-step Nash 
Value Iteration Algorithm} to find the overall deterministic Nash equilibrium policy that favors agent $B$. In the case when there are multiple pure Nash equilibria, depending on the agents' need, Algorithm \ref{alg:time2-nash-vi} outputs the policy profile that favors the return of the chosen agent $i$. Please refer to Appendix \ref{app:algorithm} for details of this algorithm. We prove that the output of this algorithm is a history-dependent pure-strategy Nash equilibrium.

\begin{theorem}
\label{thm:optimal-selector}
Consider a finite two-step dynamic game. If the two-step continuation game admits at least one pure-strategy Nash equilibrium, and that at least one induced one-step game admits a pure-strategy Nash equilibrium, then Algorithm \ref{alg:time2-nash-vi} returns a pure-strategy subgame-perfect Nash equilibrium policy profile.
\end{theorem}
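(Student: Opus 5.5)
The proof is a backward-induction argument that mirrors the structure of Algorithm \ref{alg:time2-nash-vi}. I read the algorithm as follows. First, for every public history $h_2=(x_1,\bm{a}_1,x_2)$, it solves the induced one-step game at time $2$ with payoffs $r_2(x_2,\cdot)$ and selects a pure Nash equilibrium $\bm{a}_2^*(h_2)$, breaking ties in favor of the chosen agent $i$. This defines continuation values $V_2^j(h_2)=r_2^j(x_2,\bm{a}_2^*(h_2))$. Second, it forms the induced first-step game with payoffs
\begin{align*}
Q_1^j(x_1,\bm{a}_1)=r_1^j(x_1,\bm{a}_1)+\sum_{x_2}P(x_2\mid x_1,\bm{a}_1)\,V_2^j(x_1,\bm{a}_1,x_2),
\end{align*}
and selects a pure Nash equilibrium $\bm{a}_1^*(x_1)$ of that game. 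The output profile is $\pi_1(h_1)=\bm{a}_1^*(x_1)$ and $\pi_2(h_2)=\bm{a}_2^*(h_2)$.

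Before the main argument, I need to check that the algorithm is well defined. Because the game is finite, a pure equilibrium can be found by enumeration whenever one exists. The hypotheses are meant to guarantee existence. I will interpret the assumption on the continuation game as saying that every reachable time-2 subgame admits a pure equilibrium, so $\bm{a}_2^*(h_2)$ exists for every $h_2$. I will interpret the assumption that "at least one induced one-step game admits a pure equilibrium" as applying to the first-step game built from the selected continuation values, so $\bm{a}_1^*$ exists. If the algorithm instead ranges over several selections of $V_2$, I will argue that it returns a pair for which the stage-1 game has a pure equilibrium, which the hypothesis ensures is possible. I expect this to be the main obstacle: the statement is loose, and the proof must fix a precise reading of the hypotheses that matches what Algorithm \ref{alg:time2-nash-vi} actually searches over. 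My first step will be to state that reading explicitly as a lemma-level assumption.

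With well-definedness settled, I will prove subgame perfection using the one-shot deviation principle, which holds trivially in a finite two-period game. I will also verify it directly, in two parts.

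\emph{Stage 2.} Take any history $h_2$, including off-path histories. The subgame starting at $h_2$ is exactly the one-step game with payoffs $r_2(x_2,\cdot)$. Since $\bm{a}_2^*(h_2)$ is a Nash equilibrium of that game, no agent $j$ gains by deviating there.

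\emph{Stage 1.} Consider agent $j$ with an arbitrary deviation $(\sigma_1^j,\sigma_2^j)$ against the others' $\pi^{-j}$. Conditioning on the realized $\bm{a}_1$ and $x_2$, the stage-2 result shows that agent $j$'s continuation payoff is at most $V_2^j(h_2)$. Hence the deviation's expected return is at most $Q_1^j\bigl(x_1,(\sigma_1^j(h_1),\bm{a}_1^{*,-j})\bigr)$. This in turn is at most $Q_1^j(x_1,\bm{a}_1^*)$, because $\bm{a}_1^*$ is a Nash equilibrium of the induced game. Finally, $Q_1^j(x_1,\bm{a}_1^*)$ equals agent $j$'s value under $\pi$.

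Together, the two stages show that $\pi$ is a Nash equilibrium in every subgame, so it is subgame perfect. The profile is pure because every selection step picks a pure action profile. History dependence enters only because $\bm{a}_2^*$ is indexed by $h_2$ rather than by $x_2$. As a consistency check, I will note that this is exactly what lets the algorithm recover the reciprocity equilibrium with returns $(8.9,6)$ in the motivating example.
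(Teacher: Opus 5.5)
Your proof is correct. It uses the same backward-induction skeleton as the paper: the stage-2 equilibrium at every history bounds the deviator's continuation, and the stage-1 equilibrium of the induced game finishes the argument. You chain the two inequalities differently, though, and your chaining is the valid one. The paper splits
\begin{align*}
&Q_1^{i,\sigma^*}(x_1,\bm{a}_1^*)-Q_1^{i,\widetilde\sigma}(x_1,\tilde{\bm{a}}_1)\\
&=\bigl[Q_1^{i,\widetilde\sigma}(x_1,\bm{a}_1^*)-Q_1^{i,\widetilde\sigma}(x_1,\tilde{\bm{a}}_1)\bigr]+\bigl[W_i^{\sigma^*}(x_1,\bm{a}_1^*)-W_i^{\widetilde\sigma}(x_1,\bm{a}_1^*)\bigr]
\end{align*}
and asserts the first bracket is nonnegative because $\bm{a}_1^*$ is an equilibrium. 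But $\bm{a}_1^*$ is an equilibrium of the game induced by $\sigma^*$, not by $\widetilde\sigma$. That bracket can be negative when the deviator's step-2 play after the unreached history $\bm{a}_1^*$ is costly to itself. The sum is still nonnegative, but not for the stated reason.

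Your chain amounts to the split
\begin{align*}
\bigl[Q_1^{i,\sigma^*}(x_1,\bm{a}_1^*)-Q_1^{i,\sigma^*}(x_1,\tilde{\bm{a}}_1)\bigr]+\bigl[W_i^{\sigma^*}(x_1,\tilde{\bm{a}}_1)-W_i^{\widetilde\sigma}(x_1,\tilde{\bm{a}}_1)\bigr].
\end{align*}
Each bracket here is nonnegative by exactly what the algorithm guarantees: the stage-1 equilibrium under $\sigma^*$, and the stage-2 equilibrium at each history.

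Your version also buys three further things:
\begin{itemize}
\item It covers arbitrary deviations $(\sigma_1^j,\sigma_2^j)$, including deviations at only one step, which the paper's ``deviate at both steps with $\widetilde a_t^i\neq a_t^{i,*}$'' formally excludes.
\item It makes explicit the reading of the hypotheses the algorithm needs: $\mathcal{E}_2(h_2)\neq\varnothing$ for every $h_2$, so that selectors exist, and $\mathcal{E}_1(\sigma_2)\neq\varnothing$ for some selector, so that $\mathcal{C}\neq\varnothing$.
\item The argument applies to every candidate $(\bm{a}_1,\sigma_2)\in\mathcal{C}$. So the final $\arg\max$ selection, and your slightly inaccurate tie-breaking description of the algorithm, do not affect the conclusion.
\end{itemize}
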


\begin{proof}
We prove by showing that none of the agent $i$ prefers to deviate from the output policy $\pi^{*}$ at any time step $t=1$ and $t=2$. Please see Appendix \ref{app:proof} for details of this proof.
\end{proof}

\textbf{Extension of Algorithm \ref{alg:time2-nash-vi} to Finite $n$-step Games}: The two-step algorithm can be extended to an $n$ step dynamic game by induction. First, solve the final two-step continuation game, consisting of times $n-1$ and $n$ using Algorithm \ref{alg:time2-nash-vi}. This generates the set of deterministic history-dependent Nash equilibrium continuation policies for the last two steps.

Next, apply each continuation policy for the previous step $n-2$. For every possible history $h_{n-2}$, compute the immediate reward plus the expected equilibrium value generated by the selected continuation policy. Then find the Nash equilibria of this induced one-step game and combine each equilibrium action with the continuation policy. This gives the equilibrium policies for the last three steps. Repeat this procedure backward until the algorithm yields the deterministic history-dependent Nash equilibrium policies for the entire $n$-step dynamic game.

\subsection{Delegation Through Transferable Money}\label{subsec:reward_trade}
We now consider a more realistic setting in which agents can form delegation contracts through transferable money. The key feature of this framework is that money is separate from the environmental rewards that agents aim to maximize. The reward function represents the real-world utility obtained by each agent from the environment, while money only serves as an internal medium as a record of past joint actions and states, and money enables delegation and coordination in a distributed multi-agent reinforcement learning. 

Consider a Markov game (\ref{model:delegation-game}). Let $s_t$ denote the environmental state at time $t$, and let $\mathcal{A}_i$ be the environmental action space of agent $i$. To improve this regular Markov game with a money transfer mechanism, each agent is equipped with a money balance state $w^i_t\geq 0,\forall t,i$, and define the augmented individual state as $\tilde{s}^i_t=(s_t,w^i_t)$. Now the individual state records both the environmental state and the leftover money balance of agent $i$. More generally, the money can be multi-dimensional instead of a scalar.

At each time step $t$, each agent $i$ chooses a physical action $a_t^i$ and a money transfer action $\bm{p}_t^i$.
This physical action is mapped from the environmental policy ${a}^i_t\sim\pi_i^{\operatorname{env}}(\tilde{s}^i_t)$.

\textbf{Money Transfer Actions} The money transfer action of each agent $i$ is $\bm{p}_t^i=[p^{ij}_t]_{j\neq i}\in \mathbb{R}^{n-1}_{+}$. Its domain is determined by a prescribed contract: $\mathrm{Cr}^i:\mathcal{A}\rightarrow \mathbb{R}^{n-1}$ that maps from each \textit{joint action} to a list of fixed prices agent $i$ is willing to pay for other agents. At time step $t$, if agent $j$ performs the corresponding physical action $a_j$ at step $t$, agent $i$ pays money $p^{ji}_t=\mathrm{Cr}^i_j(\bm{a})$ to agent $j$ (we can also make the amount depend on historical actions).
Thus, the discrete money transfer actions $\bm{p}^{i}_t$ are chosen according to the current augmented state and the current joint action (it can also depend on historical actions or trajectories):
$\bm{p}^{i}_t\sim \pi^{\operatorname{tr}}_i(\cdot|\tilde{s}_t, \bm{a}_t).$   
Agent $i$'s transfer policy is therefore denoted by $$\pi_i^{\operatorname{tr}}:\tilde{s}_t\times\mathcal{A}\rightarrow\mathbb{R}^{n-1}_{+}.$$
This policy determines the vector of payments $[p^{ij}_t]_{j\neq i}$ made by agent $i$ to all other agents.  The full policy of agent $i$ is therefore $\pi_i=(\pi_i^{\operatorname{env}},\pi_i^{\operatorname{tr}})$, and the joint policy is  $\bm{\pi}=(\bm{\pi}^{\operatorname{env}},\bm{\pi}^{\operatorname{tr}}).$

\textbf{Insufficient Money Balance} When agent $i$ wants to pay more than its current balance, i.e., $\| \bm{p}_{t}^{i}\|_1 > w_t^i$, then the desired joint action is not feasible, and agent $i$ needs to turn to an affordable joint action. When $w_t^i$ drops to $0$, then no transfer is allowed. 

The agent's objective is still purely environmental: $J(\bm{\pi})=\mathbb{E}^{\bm{\pi}}[\sum_{t=1}^{T}\sum_{i=1}^N \gamma^{t-1}R_i(s_t,\bm{a}_t)],$
and money does not directly enter the reward function. 

\section{Numerical Analysis}
\label{sec:numerical}
This section presents the numerical results tied to the mechanisms introduced in the problem formulations.
We begin with management and delegation structures on MARL. The delegation policy is trained for solving GSM8K~\citep{cobbe2021gsm8k}, a
public grade-school math word-problem benchmark with reference answers, using
one shared solver stack and changing only the interaction rule between the two
subsections.
Protocol knobs that matter for reproducibility---prices, caches, seeds,
hyper-parameters, and compute---are collected in Appendix~\ref{app:gsm8k-suppl}.
The remaining subsections treat history-dependent policies and money transfer
separately.

\paragraph{Shared GSM8K stack.}
We use three frozen solvers of different strength and cost on this math
benchmark: Nemotron~3 Ultra as solver~$A$,
Llama~3.2~3B as solver~$B$, and Llama~3.2~1B as solver~$C$.
Only a centralized DQN coordinator $\pi$ is trained; $A$, $B$, and $C$ stay
frozen.
Synthetic token prices make solver~$A$ expensive and solver~$C$ cheap
(Appendix~\ref{app:gsm8k-suppl}).
Before training the MARL interface, we cache each solver's GSM8K answers and offline correctness labels
once; during training $\pi$ reuses those fixed offline answers and labels
instead of calling the solvers or a live labeler at every step, which keeps the
runs reproducible (Appendix~\ref{app:gsm8k-suppl}).
Each episode packs $32$ questions, allows up to three attempts per question
(fail penalty $-1$), and runs for $1000$ episodes with
$\gamma=0.99$, cost weight $\lambda=100$, and handoff fee $\alpha=0.05$.
Unless noted otherwise, we report mean$\pm$std of return, accuracy, and cost
across seeds $0$--$19$.

The \emph{always-Ultra} baseline is a no-handoff policy that sends every
question to solver~$A$ alone, under the same caches and cost rules as the
learned policy in that subsection.

\subsection{Management Constraints on GSM8K}
\label{subsec:exp21}
\label{subsec:gsm8k-mgmt-protocol}
Here we put the hierarchy of Section~\ref{subsec:manage} on GSM8K in a
restricted form.
The coordinator may use only three routes along the chain
$A\rightarrow B\rightarrow C$:
(i)~solver~$A$ answers alone;
(ii)~$A$ hands the question to solver~$B$, and $B$ answers;
or (iii)~the question is passed $A\rightarrow B\rightarrow C$, and solver~$C$
answers.
Exactly one solver runs on each attempt.
Skipping straight from $A$ to $C$ (allowed in the theory section) is turned off
here, so every handoff must follow the chain in order.

The step reward credits a correct answer under the offline labels, charges for
the solve and for each handoff, and applies a failure penalty after three
misses:
$$
r \;=\; R_{\mathrm{corr}}
\;-\; \lambda\, C_{\mathrm{solve}}(a_{\mathrm{exec}})
\;-\; \lambda\, C_{\mathrm{edge}}(n_e,a_{\mathrm{exec}})
\;+\; R_{\mathrm{fail}},
$$
where $C_{\mathrm{edge}}=n_e\,\alpha\,C_{\mathrm{solve}}$ with $\alpha=0.05$,
and $R_{\mathrm{fail}}$ is $0$ on ordinary steps and $-1$ after three failed
attempts (so the leading ``$+$'' adds a negative penalty; see
Appendix~\ref{app:gsm8k-suppl}).

Table~\ref{tab:gsm8k-mgmt-train} and Figure~\ref{fig:gsm8k-mgmt-agg-with}
show the results in the late MARL training.
The \textit{always-Ultra} baseline has return $23.90{\pm}0.30$ at the cost of
$\$0.063{\pm}0.002$ per episode.
The MARL learned coordinator reaches $27.56{\pm}0.32$ return at the cost of 
$\$0.033{\pm}0.002$ per episode (about $\$31.1$ vs.\ $\$60.8$ cumulative over
$1000$ episodes), with accuracy $0.982{\pm}0.002$ versus $0.971{\pm}0.002$.
That is roughly $+3.7$ return at about half the synthetic cost.
A paired Wilcoxon signed-rank test over the $20$ seed-wise late-train means
finds higher return and lower cost than always-Ultra in all $20$ seeds
(two-sided $p=2^{-19}$). Figure~\ref{fig:gsm8k-mgmt-macro} tracks which solver is called.
Late policies lean on solver~$B$, still ask solver~$A$ on a nontrivial share,
and almost never rely on solver~$C$.

\begin{table}[t]
\centering
\scriptsize
\setlength{\tabcolsep}{3pt}
\begin{tabular}{lccc}
\toprule
Late-train metric & DQN+map & DQN no map & always-Ultra \\
\midrule
Return & $27.56{\pm}0.32$ & $27.65{\pm}0.28$ & $23.90{\pm}0.30$ \\
Accuracy & $0.982{\pm}0.002$ & $0.982{\pm}0.002$ & $0.971{\pm}0.002$ \\
Cost (\$/ep) & $0.033{\pm}0.002$ & $0.032{\pm}0.002$ & $0.063{\pm}0.002$ \\
Cum.\ cost (\$/$1000$) & $31.1{\pm}0.7$ & $31.1{\pm}0.7$ & $60.8{\pm}0.2$ \\
\bottomrule
\end{tabular}
\caption{Management Constraints: Late-training metrics (last $100$ episodes; mean$\pm$std over $20$ seeds).}
\label{tab:gsm8k-mgmt-train}
\end{table}

\begin{figure}[t]
\centering
\IfFileExists{figures/exp21/aggregate_mean_std_with_map.png}{\includegraphics[width=0.95\linewidth]{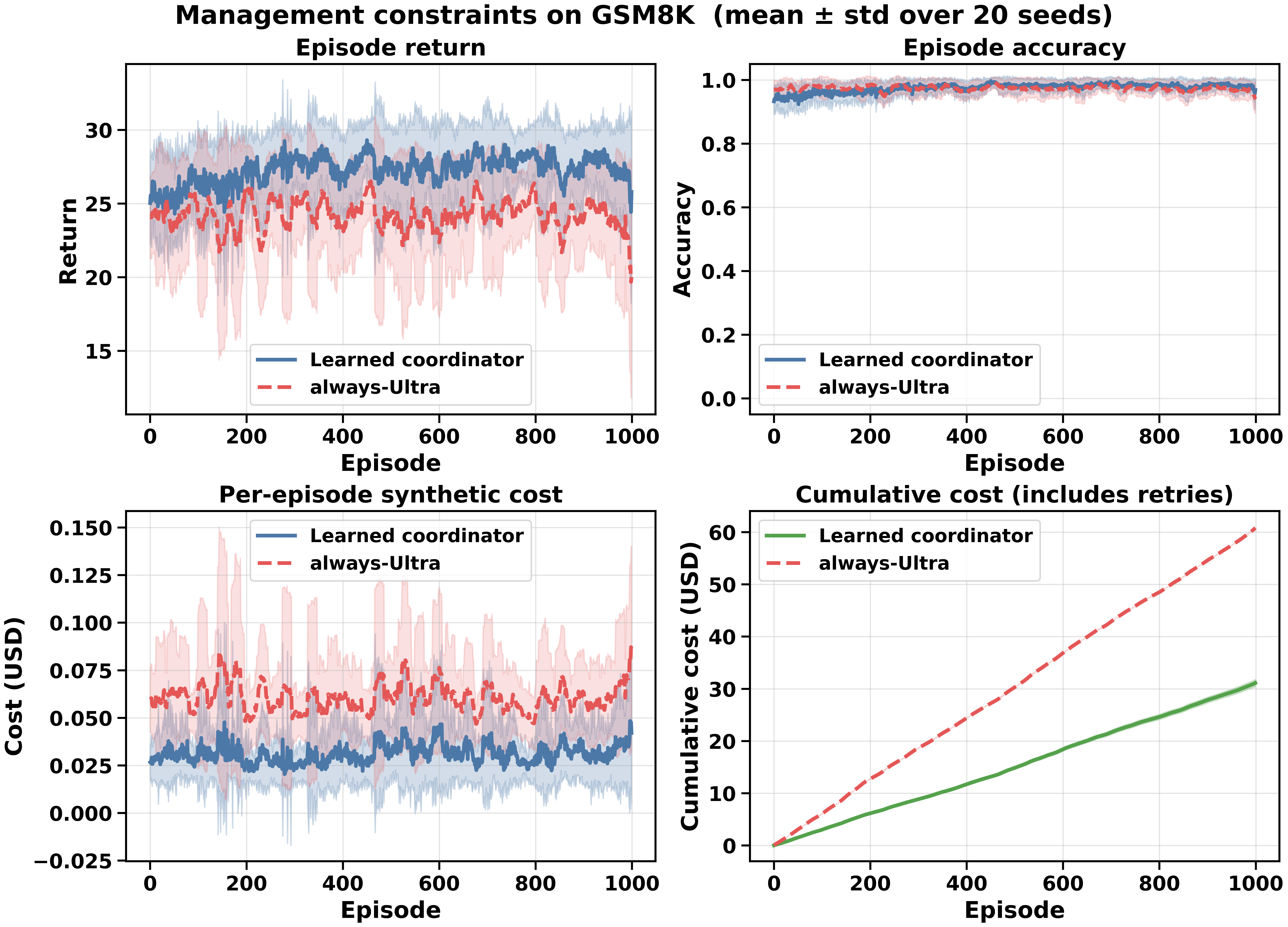}}{\vspace{1.8in}}
\caption{Management constraints on GSM8K: mean$\pm$std over $20$ seeds for return, accuracy, per-episode cost, and cumulative cost (MARL learned coordinator vs always-Ultra).}
\label{fig:gsm8k-mgmt-agg-with}
\end{figure}

\begin{figure}[t]
\centering
\IfFileExists{figures/exp21/aggregate_macro_mix_with_map.png}{\includegraphics[width=0.95\linewidth]{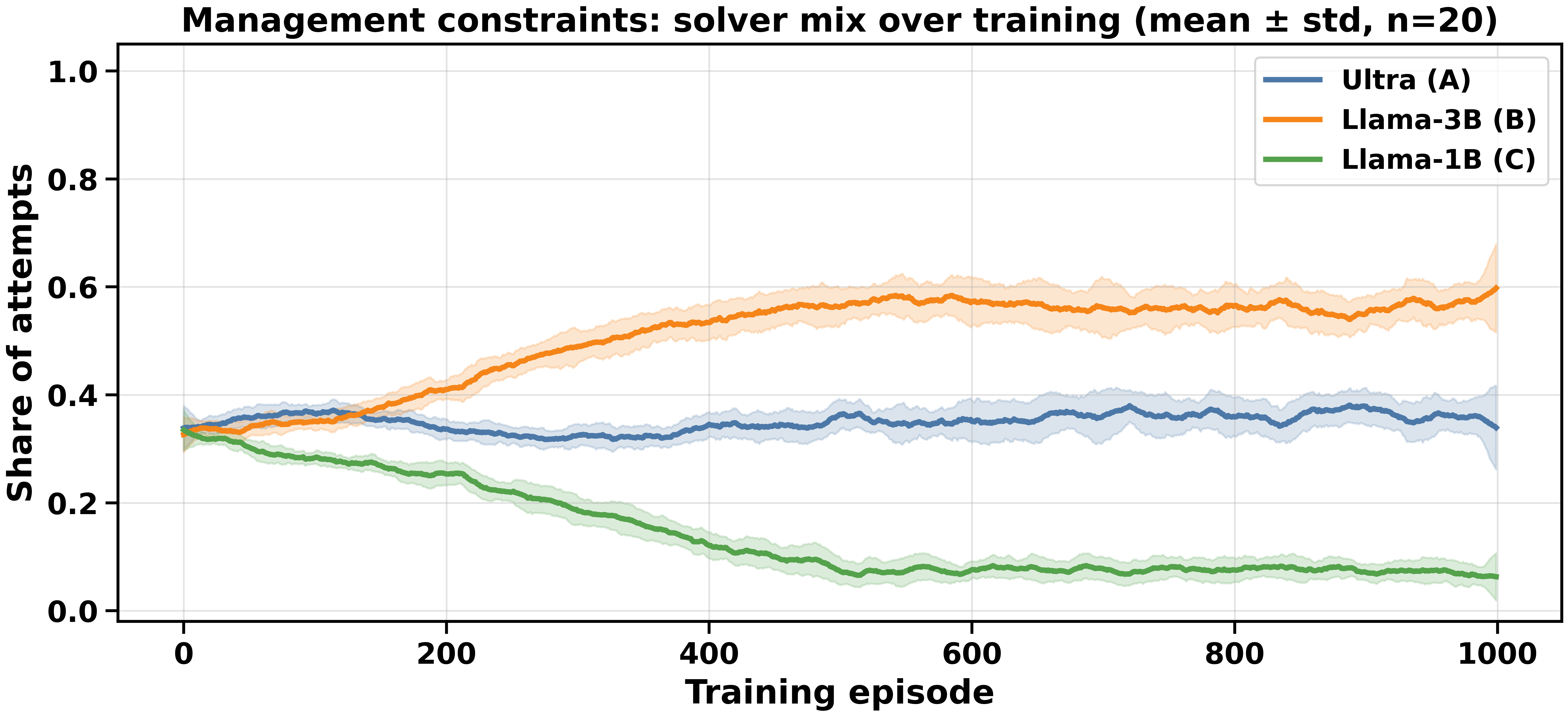}}{\vspace{1.2in}}
\caption{Management constraints: share of attempts using Ultra ($A$), Llama~3B ($B$), and Llama~1B ($C$) over training (mean$\pm$std over $20$ seeds).}
\label{fig:gsm8k-mgmt-macro}
\end{figure}

\subsection{Delegation Constraints on GSM8K}
\label{subsec:exp22}
We keep the same GSM8K stack and swap to the freer delegation rule indicated in
section~\ref{subsec:delegate}. The coordinator first chooses a non-empty kickoff set among the seven options
$\{\{A\}$, $\{B\}$, $\{C\}$, $\{A,B\}$, $\{A,C\}$, $\{B,C\}$, and $\{A,B,C\}\}$.
Then, each of $A$, $B$, and $C$ chooses one of the three actions: answer,
pass to the first other agent, or pass to the second other agent.
Packing kickoff with the three local choices gives
$7\times 3^3=189$ joint discrete actions.
An agent answers only if it was handed the task (by kickoff or by a peer pass)
and it chooses to answer; Multiple agents may answer in the same attempt.
We count the attempt as successful if any agent is correct. We charge the sum of those agents' solve costs, and add $\alpha=0.05$ edge fee (each kickoff, handoff, and peer pass).

Besides \textit{always-Ultra}, we roll out the frozen management coordinator from
section~\ref{subsec:exp21} on the same packets for a matched comparison.
\textit{Always-Ultra} is evaluated in the same graph environment, so it also pays the
kickoff edge fee when solver~$A$ is started.

Table~\ref{tab:gsm8k-del-train} and Figure~\ref{fig:gsm8k-del-agg-with}
show results in the late training.
Against \textit{Always-Ultra}, which returns $23.31{\pm}0.30$ at the cost of 
$\$0.068{\pm}0.002$ per episode, the graph coordinator reaches
$25.20{\pm}0.42$ return at the cost of $\$0.043{\pm}0.002$/episode, with accuracy
$0.961{\pm}0.005$ versus $0.971{\pm}0.002$.
There is about $1.9$ increased return at roughly $37\%$ lower per-episode cost.
The cumulative synthetic cost over $1000$ episodes falls from about $\$66.7$ to
$\$45.5$.
The same Wilcoxon test again finds a higher return and lower cost than
\textit{Always-Ultra} in all $20$ seeds with two-sided $p=2^{-19}$.
On matched packets, the management coordinator is still ahead. It returns $28.70{\pm}0.18$, with accuracy $0.986{\pm}0.001$, at the cost
$\$0.024{\pm}0.001$ per episode.
Therefore, the graph delegation beats never handing off, while the hierarchy policy remains
stronger on return at a lower cost in these runs.

Figure~\ref{fig:gsm8k-del-exec} shows the executor mix.
Compared with management routing, the graph policies call solver~$C$ more often,
and sometimes several agents answer the same question.
That freer pattern helps explain why return and accuracy sit a bit below the
hierarchy policy, even though the cost stays below \textit{Always-Ultra}.

\begin{table}[t]
\centering
\scriptsize
\setlength{\tabcolsep}{2.5pt}
\begin{tabular}{lcccc}
\toprule
Late-train metric & DQN+map & DQN no map & always-Ultra & Mgmt.\ $\pi$ \\
\midrule
Return & $25.20{\pm}0.42$ & $25.21{\pm}0.33$ & $23.31{\pm}0.30$ & $28.70{\pm}0.18$ \\
Accuracy & $0.961{\pm}0.005$ & $0.961{\pm}0.004$ & $0.971{\pm}0.002$ & $0.986{\pm}0.001$ \\
Cost (\$/ep) & $0.043{\pm}0.002$ & $0.043{\pm}0.002$ & $0.068{\pm}0.002$ & $0.024{\pm}0.001$ \\
Cum.\ cost (\$/$1000$) & $45.5{\pm}0.6$ & $45.4{\pm}0.6$ & $66.7{\pm}0.2$ & $23.9{\pm}0.1$ \\
\bottomrule
\end{tabular}
\caption{Delegation constraints: Late-training metrics (last $100$ episodes; mean$\pm$std over $20$ seeds; handoff fee $\alpha=0.05$). Mgmt.\ $\pi$ is the frozen management coordinator on the same packets.}
\label{tab:gsm8k-del-train}
\end{table}

\begin{figure}[t]
\centering
\IfFileExists{figures/exp22/aggregate_mean_std_with_map.png}{\includegraphics[width=0.95\linewidth]{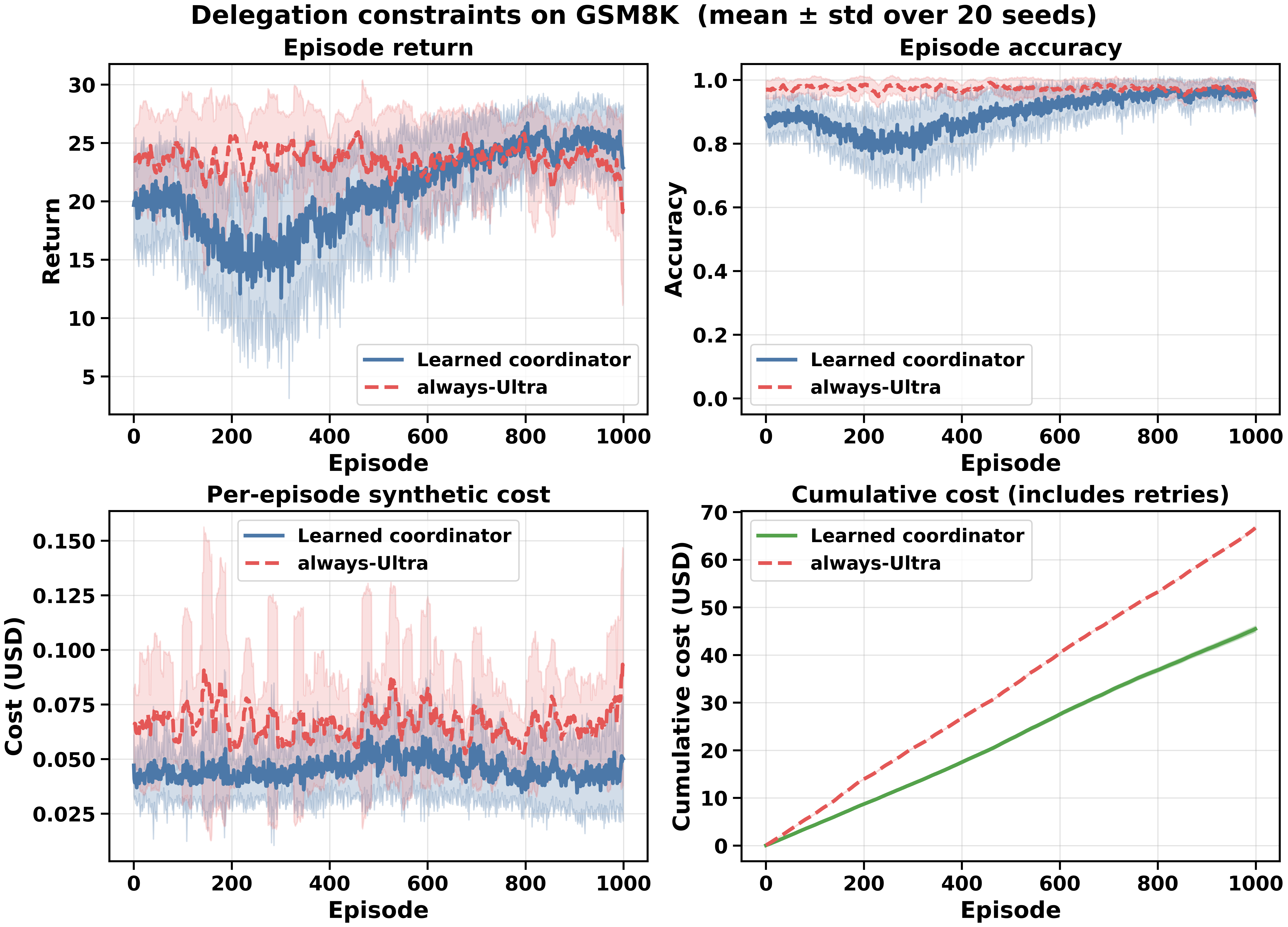}}{\vspace{1.8in}}
\caption{Delegation constraints on GSM8K: mean$\pm$std over $20$ seeds for return, accuracy, per-episode cost, and cumulative cost (learned coordinator vs always-Ultra).}
\label{fig:gsm8k-del-agg-with}
\end{figure}

\begin{figure}[t]
\centering
\IfFileExists{figures/exp22/aggregate_executor_mix_with_map.png}{\includegraphics[width=0.95\linewidth]{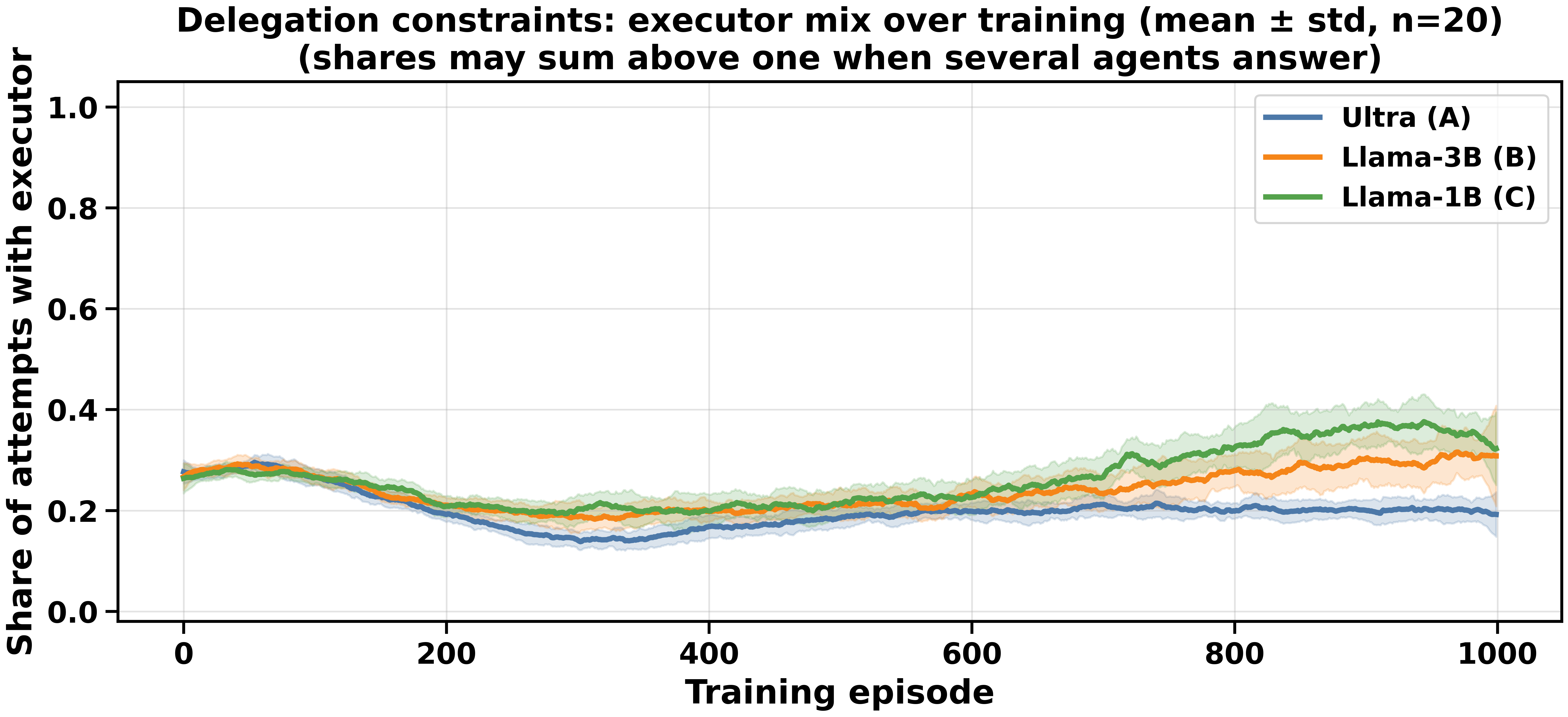}}{\vspace{1.2in}}
\caption{Delegation constraints: share of attempts with Ultra ($A$), Llama~3B ($B$), and Llama~1B ($C$) executing over training (mean$\pm$std over $20$ seeds; shares may sum above one when several agents answer).}
\label{fig:gsm8k-del-exec}
\end{figure}

\subsection{Interactions of Distributed Agents with the Money Trading Mechanism}
\label{subsec:money_trading_exp}
This experiment trains two independent Q-learning agents with and without the money transfer mechanism for comparison. The environment works as follows: There are two agents $A$ and $B$ participating in the sequential games. In the training, there are $20$ steps in each episode, and $5000$ episodes in total. The agents alternate in being the active agent: Agent $A$ is active in one phase, typically the odd-numbered steps, and agent $B$ is active in another phase, typically in even-numbered steps. However, only the action of the currently active agent affects the physical rewards. 

A state has two components $s=(\operatorname{phase},\operatorname{t})$ where $\operatorname{phase}\in\{s_1,s_2\}$ indicates which which agent is active. $\operatorname{t}$ is the current time step index. This experiment has two phases, and therefore, the state space is $\mathcal{S}=\{(s_1,\operatorname{t}),(s_2,\operatorname{t}):\operatorname{t}=0,\dots,20\}$. The action space of each agent is binary: $\mathcal{A}_A=\mathcal{A}_B=\{0:\operatorname{Null},1:\operatorname{Yes}\}$, where $\operatorname{Null}$ means ``do not help with others'', and $\operatorname{Yes}$ means ``help the other agent''. The transition of states is deterministic, such that the state alternates between the two possible phases as $s_1\rightarrow s_2\rightarrow s_1\rightarrow \dots \rightarrow s_2$ regardless of the joint actions. The time counter $\operatorname{t}$ increases by $1$ after every step within each episode.

The active agent chooses whether to incur a small personal cost to generate a large benefit for the other agent. When agent $A$ is active, the reward function is that 
\begin{align*}
(r^A_t,r^B_t)=
    \begin{cases}
    (0,0),\text{ if $A$ chooses }\operatorname{Null}\\
    (-1,100),\text{ if $A$ chooses }\operatorname{Yes}
    \end{cases}.
\end{align*}
Symmetrically, if $B$ is active, then the reward function is
\begin{align*}
(r^A_t,r^B_t)=
    \begin{cases}
    (0,0),\text{ if $B$ chooses }\operatorname{Null}\\
    (100,-1),\text{ if $B$ chooses }\operatorname{Yes}
    \end{cases}.
\end{align*}
The returns of the independent Q-learning with and without a money transfer mechanism are shown in Figure~\ref{fig:money_transfer_results}.

\begin{figure}
    \centering
    \includegraphics[width=0.85\linewidth]{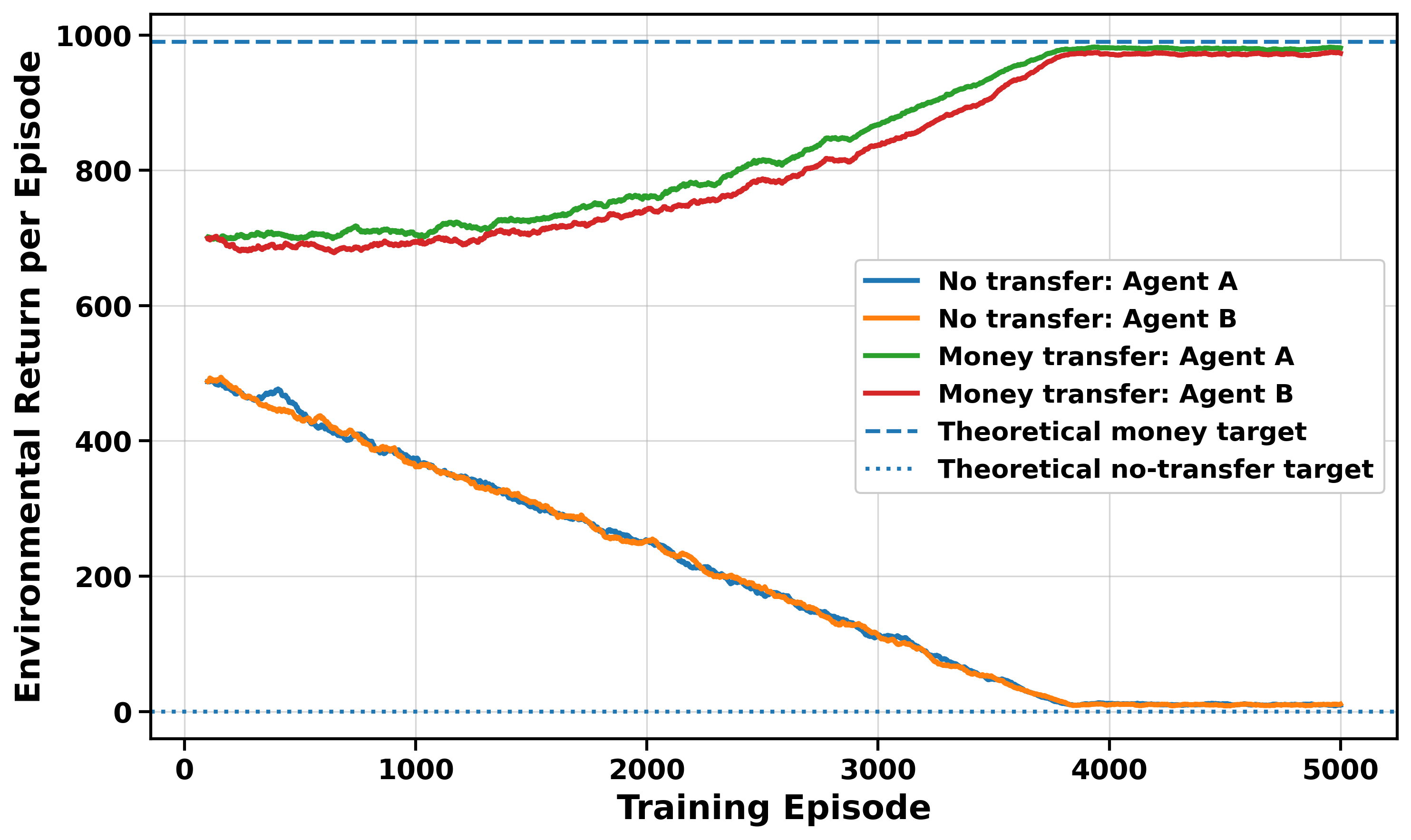}
    \caption{Training Returns of Independent Q-learning with Money Transfer Mechanism}
    \label{fig:money_transfer_results}
\end{figure}
\textit{1. Independent Q-learning (IQL)} Agents $A$ and $B$ maintain separate $Q$-functions and independently select their actions using an $\varepsilon$-greedy policy. Each agent updates its action values using only its own environmental reward, while treating the other agent's behavior as part of the environment. After the $5000$-episode training, the learned policy is that both agents refuse to collaborate. On agent $A$'s turns, agent $A$ always chooses the action $\operatorname{Null}$. On agent $B$'s turns, agent $B$ always chooses the action $\operatorname{Null}$ as well. This non-cooperative joint policy results in a $0$ episodic environmental return.

\textit{2. IQL with Money Transfer} With the new formulation, we consider an extended independent Q-learning with money transfers. The environment is augmented with transferable money balances, and therefore a state consists of $s=(\operatorname{phase},\operatorname{t},\bm{w})$, where $\bm{w}=(w_1,w_2)$ is the money balances of both agents. Each agent starts with an initial balance of $1$ unit. Before the active agent chooses its physical action, the other agent acts as the payer and decides whether to offer a fixed payment $1$ unit for the $\operatorname{Yes}$ action. An offer is made only when the payer has sufficient funds: $w_i\geq 1$. If the payment is made, the corresponding amount is transferred from the payer's balance to the active agent's balance, and the contract requires the active agent choosing the action $\operatorname{Yes}$.

After $5000$ training episodes, the learned joint policy is that on agent $A$'s turns, agent $B$ transfers one unit to $A$. By contract, agent $A$ chooses the action $\operatorname{Yes}$ to help the agent $B$. The physical and money policies are symmetric on agent $B$'s turns. For every episode with $20$ time steps, both agents helped at all $10$ of their active stages and obtained an environmental return of $990$. Thus, the money-transfer framework successfully supported the fully cooperative outcome.

\section{Conclusion}
\label{sec:conclusion}
In this paper, we study the delegation problem among a heterogeneous multi-agent group. We also proposed frameworks of collaboration/delegation for multi-agent systems with history-based policies and money-based policies. In terms of limitations, this work has not addressed the pricing of the joint actions based on history, even though we have shown that certain pricing of the joint actions can help the agents achieve collaborations. More efficient algorithms are needed for finding the Nash equilibra and coarse correlated equilbria using history-dependent policies.

\bibliographystyle{unsrt}  
\bibliography{related_work_citations, ref_part1}  

\appendix
\section{Related Work}
\label{app:related_works}
\subsection{Cost-Aware Routing Across Language Models}
A growing body of work studies how to exploit differences in the capabilities and costs of available language models. FrugalGPT considers prompt adaptation, model approximation, and cascades of language models to reduce inference cost while maintaining response quality \citep{chen2023frugalgpt}. RouteLLM learns a router from preference data to select between a stronger, more expensive model and a weaker, less expensive model \citep{ong2025routellm}. AutoMix first queries a smaller model and uses self-verification together with a partially observable decision process to determine whether the query should be escalated to a larger model \citep{aggarwal2024automix}. RouterBench and LLMRouterBench provide standardized datasets and evaluation frameworks for studying routing performance and performance--cost trade-offs across collections of language models \citep{hu2024routerbench,li2026llmrouterbench}.
These approaches primarily treat routing as a selection problem in which a centralized controller assigns a query to one or more passive models. The routed models do not independently choose whether to execute, delegate, or reject a task. Our management formulation shares the objective of assigning simple tasks to inexpensive agents and difficult tasks to more capable agents, but it represents execution and delegation as actions within a Markov game. The resulting policy therefore describes not only which model should be used, but also which agent initiates the delegation and which delegation paths are feasible under the organizational structure 
Recent work has extended routing from single-step model selection to sequential decision making. Router-R1 formulates multi-model routing and aggregation as an RL problem in which an LLM router interleaves internal reasoning with model-invocation actions and receives a cost-aware reward \citep{zhang2025routerr1}. MasRouter expands the routing problem to multi-agent systems by jointly determining a collaboration mode, assigning fixed roles, and selecting an LLM for each role using a cascaded controller \citep{yue2025masrouter}. These methods are particularly close to our cost-performance objective because they adaptively determine which model capabilities should be invoked rather than relying on a fixed cascade.
Nevertheless, the router in these systems remains a centralized orchestrator that constructs or controls the multi-agent workflow. In our framework, delegation is instead endogenous to the participating agents. An agent may exercise its pretrained policy, delegate to another agent, or remain inactive, and the validity and outcome of these choices depend on the actions of the other agents. This formulation supports questions that are not represented by centralized routing, including whether a delegated agent accepts execution, how authority constrains delegation, how reciprocal behavior develops over repeated interactions, and how agents compensate one another for costly actions. 
\subsection{Multi-Agent Reinforcement Learning for LLM Collaboration}
Recent studies have directly formulated LLM collaboration as a cooperative MARL problem. Liu et al.\ introduce Multi-Agent Group Relative Policy Optimization, or MAGRPO, which jointly fine-tunes multiple LLM agents using group-level feedback in multi-agent, multi-turn writing and coding tasks \citep{liu2025magrpo}. Their algorithm addresses the limitation that independently pretrained language models are not necessarily optimized to coordinate and demonstrates that collaborative behavior can be improved through a shared multi-agent training objective.
In another work, CoLLM-CC \citep{liu2026collm} uses a centralized critic during training, whereas CoLLM-DC uses decentralized critics. The comparison shows that Monte Carlo optimization and decentralized value estimation can perform well in relatively short-horizon, dense-reward tasks, while a centralized critic is more effective in long-horizon or sparse-reward settings. This work establishes centralized training with decentralized execution as a promising approach for optimizing LLM-agent collaboration and is directly relevant to the algorithmic implementation of multi-step delegation.
MAGRPO and CoLLM are the closest prior works to ours in their use of MARL for multiple LLM agents. However, the learning targets are different. These methods primarily optimize the agents' language-generation policies within a specified collaboration and execution protocol. Participating agents produce responses or actions according to their assigned roles, and MARL improves how effectively those agents collaborate. Our primary formulation instead learns a meta-policy over agents that already possess pretrained task policies. The meta-policy determines whether a particular pretrained policy should be exercised and whether execution should be delegated to another agent. Thus, MAGRPO and CoLLM optimize behavior within a collaborative arrangement, whereas our framework optimizes the delegation and activation structure governing which underlying capability is used. In other words, our framework can be utilized in conjunction with MAGRPO and CoLLM.
The treatment of heterogeneity also differs. Existing LLM-MARL experiments may pair agents with different model architectures or complementary roles, but they do not primarily formulate heterogeneity as a capability; cost trade-off in which invoking a stronger policy incurs a larger execution cost. Our formulation explicitly associates the underlying agents with different policy quality and operating costs and learns when the additional capability of an expensive agent justifies its use. Moreover, our delegation policies may be restricted by management or delegation relationships, while the LLM-MARL methods above assume a predefined protocol describing which agents participate and how their outputs are combined.
Finally, the two lines of work assign different roles to interaction history. In decentralized LLM collaboration, local trajectories and centralized information are useful for handling partial observability and estimating long-horizon returns. In our history-dependent game, previous actions, delegation choices, and transfers are also strategically meaningful. Agents may condition future behavior on whether another agent previously cooperated, accepted a costly task, or honored a payment. This distinction allows history to support reciprocity and intertemporal incentives even when the current environmental state is fully observable 
Closer to our terminology of \emph{delegation} and \emph{contracts}, Ivanov et al.\ combine deep RL with principal--agent contract theory so a principal can steer agents through outcome-contingent payments \citep{ivanov2024principal}, while Zhu et al.\ propose COMMAND, in which a principal LLM competitively delegates reasoning tasks to multiple agent LLMs \citep{zhu2025command}. Relative to these lines, we focus on cost-aware management and graph-style delegation among frozen heterogeneous solvers, together with history-dependent reciprocity and transferable money inside a Markov-game wrapper.

\section{Algorithm Finding History-dependent Policies $^{\text{\ref{alg:time2-nash-vi}}}$}
\label{app:algorithm}

\begin{algorithm*}[t]
\caption{Two-step Nash Equilibrium Value Iteration}
\label{alg:time2-nash-vi}
\begin{algorithmic}[t]
\footnotesize
\setlength{\abovedisplayskip}{4pt}
\setlength{\belowdisplayskip}{4pt}
\setlength{\abovedisplayshortskip}{2pt}
\setlength{\belowdisplayshortskip}{2pt}

\REQUIRE Initial state $x_1$; state space $\mathcal{S}$; action spaces
$\mathcal{A}_A$ and $\mathcal{A}_B$; transition probabilities
$P(x_2\mid x_1,\bm{a}_1)$; rewards $r_t^i$ and for
$i\in\{A,B\},t\in\{1,2\}$.

\ENSURE A deterministic history-dependent Nash equilibrium policy profile
that maximizes agent $B$'s value.

\STATE Define the set of feasible two-step histories:
\[
\mathcal{H}_2
=
\left\{
(\bm{a}_1,x_2):\bm{a}_1\in\mathcal{A},x_2\in\mathcal{S}
\right\}
\]

\FOR{each history $h_2=(\bm{a}_1,x_2)\in\mathcal{H}_2$}

    \STATE Define the two-step normal-form game by
    \[
    Q^i_2(h_2,\bm{a}_2)
    =
    r^i_2(x_2,\bm{a}_2),
    \qquad i\in\{A,B\}.
    \]

    \STATE Compute the set of pure Nash equilibria
    \[
    \mathcal{E}_2(h_2)
    =
    \left\{
    \bm{a}_2\in\mathcal{A}:
    \begin{array}{l}
    Q^i_2\bigl(h_2,(a^i_2,\bm{a}^{-i}_2)\bigr)
    \geq
    Q^i_2\bigl(h_2,(\widetilde a^i_2,\bm{a}_2^{-i})\bigr),
    \quad \forall \widetilde a^i_2\in\mathcal{A}_i,\forall i\in\{1,2,\dots,N\}\\[1mm]
    \end{array}
    \right\}
    \]

\ENDFOR

\STATE Enumerate all deterministic continuation selectors
\[
\sigma_2:\mathcal{H}_2
\longrightarrow
\mathcal{A}
\]
satisfying
\[
\sigma_2(h_2)\in \mathcal{E}_2(h_2),
\qquad
\forall h_2\in\mathcal{H}_2
\]


\STATE Initialize the set of candidate equilibrium policy profiles:
$\mathcal{C}\leftarrow\varnothing$

\FOR{each continuation selector $\sigma_2\in\mathcal{S}_2$}

    \FOR{each joint action $\bm{a}_1\in\mathcal{A}$ at $t=1$,} 
        \STATE Compute the continuation value
        \[
        W_i^{\sigma_2}(x_1,\bm{a}_1)
        =
        \sum_{x_2\in\mathcal{S}}
        P(x_2\mid x_1,\bm{a}_1)
        r^i_2\!\left(
        x_2,
        \sigma_2(\bm{a}_1,x_2)
        \right),\forall i
        \]

        \STATE Compute the induced utilities at $t=1$
        \[
        Q^{i,\sigma_2}_1(x_1,\bm{a}_1)
        =
        r^i_1(x_1,\bm{a}_1)
        +
        W_i^{\sigma_2}(x_1,\bm{a}_1)
        \]

    \ENDFOR

    \STATE Find the set of pure Nash equilibria of the induced game at $t=1$:
    \[
    \mathcal{E}_1(\sigma_2)
    =
    \left\{
    \bm{a}_1:
    \begin{array}{l}
    Q_1^{i,\sigma_2}\bigr(x_1,(a_1^i,\bm{a}_1^{-i})\bigl)
    \geq
    Q_1^{i,\sigma_2}\bigr(x_1, (\widetilde a_1^i,\bm{a}_1^{-i})\bigl),
    \quad \forall \widetilde a_1^i\in\mathcal{A}_i,\forall i 
    \end{array}
    \right\}
    \]

    \FOR{each $\bm{a}_1\in E_1(\sigma_2)$}

        \STATE Construct the equilibrium policy profile
        \[
        \pi
        =
        \bigl(\bm{a}_1,\sigma_2\bigr)
        \]

        \STATE Set
        \[
        V_i^{\pi}(x_1)
        =
        Q^{i,\sigma_2}_1(x_1,\bm{a}_1),
        \qquad \forall i,\forall x_1\in\mathcal{S},\forall \pi
        \]

        \STATE Add $\pi$ to $\mathcal{C}$

    \ENDFOR

\ENDFOR

\STATE Select
\[
\pi^*
=
\arg\max_{\pi\in\mathcal{C}}\|V_i^\pi\|_1
\]

\RETURN $\pi^*$ and its equilibrium value
$\{V_i^{\pi,*}\}_{i=1}^N)$

\end{algorithmic}
\end{algorithm*}

\section{Proof of Theorem \ref{thm:optimal-selector}}
\label{app:proof}
\begin{proof}
We prove by showing that none of the agents $i$ prefers to deviate from the output policy $\pi^{*}$ at any time step $t=1$ and $t=2$. 

Without the loss of generality, suppose that agent $i$ changes its policy in both steps. Instead of adopting 
\begin{align*}
    &\pi^{i,*}_1(x_1)=a^{i,*}_1 \text{ and }\\
    &\pi^{i,*}_2(x_2,\bm{a}_1)=a^{i,*}_2
\end{align*}
The agent takes 
\begin{align*}
    \widetilde a^{i}_1 \text{ and }
    \widetilde a^{i}_2,\widetilde a_t^i\neq a_t^{i,*}, \text{ and }\widetilde a_t^i\in\mathcal{A}_i,t\in\{1,2\}
\end{align*}
separately in two steps. We denote the deviated joint action at time step $t$ as
$\tilde{\bm{a}}_t=(\widetilde a_t^{i},\bm{a}^{-i}_t)$.

Then, there will be a worse or equal reward for the second time step for agent $i$ for every possible history $h_2$, i.e.,
$$ Q_2^i \bigl(h_2,\tilde{\bm{a}}_2\bigr)\leq Q_2^i\bigl(h_2,\bm{a}^{*}_2\bigr).$$

Then, the continuation values for agent $i$ become worse for every possible first-step action $\bm{a}_1$,
\begin{align}\label{continuation-value}
 &r_2^i(x_2,\tilde{\bm{a}}_2) \leq r_2^i(x_2,\bm{a}_2^*)\\ \nonumber
    \Rightarrow &W_i^{\widetilde \sigma}(x_1,\bm{a}_1) \leq W_i^{\sigma^*}(x_1,\bm{a}_1),
\end{align}
where $\widetilde \sigma(\bm{a}_1,x_2)=\tilde {\bm{a}}_2)$, and $\sigma^*(\bm{a}_1,x_2)=\bm{a}^*_2$.

The agent $i$ also deviates from $a_{1}^{i,*}$ to $\widetilde a_1^i$, then we compare the utilities of it at $t=1$ if it deviates from $\bm{a}_1^*$ found by Algorithm \ref{alg:time2-nash-vi}.
\begin{align*}
    &\qquad Q_1^{i,\sigma^*}(x_1,\bm{a}_1^*)-Q_1^{i,\widetilde\sigma}(x_1,\tilde{\bm{a}}_1)\\&= r_1^i(x_1,\bm{a}_1^*)+W_i^{\sigma^*}(x,\bm{a}_1^*)-r_1^i(x_1,\tilde {\bm{a}}_1) - W_i^{\widetilde\sigma}(x_1,\tilde{\bm{a}_1})\\    &=r_1^i(x_1,\bm{a}_1^*)+W_i^{\sigma^*}(x,\bm{a}_1^*)-W_i^{\widetilde\sigma}(x_1,\bm{a}_1^*) + W_i^{\widetilde\sigma}(x_1,\bm{a}_1^*) \\   
    &-r_1^i(x_1,\tilde{\bm{a}}_1) - W_i^{\widetilde\sigma}(x_1,\tilde{\bm{a}_1})\\
    &=[\bigr(r_1^i(x_1,\bm{a}^*_1)+W_i^{\widetilde\sigma}(x_1,\bm{a}^*_1)\bigl) -\bigl( r_1^i(x_1,\widetilde{\bm{a}}_1) + W_i^{\widetilde\sigma}(x_1,\tilde{\bm{a}_1})\bigr)]  \\&+\bigl( W_i^{\sigma^*}(x,\bm{a}_1^*)-W_i^{\widetilde\sigma}(x_1,\bm{a}_1^*) \bigr)
\end{align*}
Since $\bm{a}_1^*$ is the Nash equilibrium action of the first step found by the algorithm, we know that by adopting the same continuation selector $\widetilde\sigma$, the first big term is greater than $0$, and we already proved that the second term is greater than $0$ in (\ref{continuation-value}). Therefore
$$ Q_1^{i,\widetilde\sigma}(x_1,\tilde{\bm{a}}_1)\leq Q_1^{i,\sigma^*}(x_1,\bm{a}_1^*),$$
and we prove that no agent $i$ wants to deviate from the Nash equilibrium policy $\pi^*$ found by Algorithm \ref{alg:time2-nash-vi}.
\end{proof}

\section{Implementation details for the GSM8K studies}
\label{app:gsm8k-suppl}
\label{app:repro}
This appendix collects protocol details for
Subsections~\ref{subsec:exp21}--\ref{subsec:exp22} that are omitted from the
main text for space.
Nothing here changes the reported numbers; it only records how those numbers
were produced (needed for reproducibility).

\paragraph{Why GSM8K.}
We evaluate on GSM8K~\citep{cobbe2021gsm8k} (\texttt{main} split), a public
grade-school math word-problem benchmark with reference solutions.
We use GSM8K because (i)~free-form answers can be checked against a gold
solution, (ii)~the corpus is large enough to carve out a fixed ability-map
bank, a training pool, and the official test holdout, and (iii)~accuracy
varies sharply across our three frozen solvers, so cost--accuracy trade-offs
are meaningful.

\paragraph{Solvers and synthetic prices.}
We use three frozen solvers: Nemotron~3 Ultra ($A$), Llama~3.2~3B Instruct
($B$), and Llama~3.2~1B Instruct ($C$).
Only the coordinator $\pi$ is trained (centralized DQN).
Synthetic token prices (\$/1M input/output) are $\$1.00/\$3.00$ for Ultra,
$\$0.25/\$1.50$ for Llama~3B, and $\$0.05/\$0.10$ for Llama~1B; these are
experimental parameters, not live provider invoices.

\paragraph{Offline preprocessing.}
Before any RL training, each solver answers eligible GSM8K items once, and we
record measured token usage.
Separately, we ask Nemotron~3 Ultra to mark each stored answer as correct or
incorrect relative to the GSM8K reference solution.
This is an offline labeling step used to measure solver accuracy and to score
cached training episodes.
Listings~\ref{lst:preprocess}--\ref{lst:train} summarize the corresponding
preprocessing and training entrypoints; the full scripts are provided in the
code-and-data supplementary ZIP (not via a web URL).
Using Ultra both as one of the three solvers and as this offline labeler is a
practical choice rather than a modeling requirement.
A separate model (for example GPT-5.5 or GLM-5.2) could have done the same
labeling job; we used Ultra because it was available to us at negligible
additional cost, and because checking free-form answers against the GSM8K
reference is a narrow comparison task.
On a hand-checked sample of $200$ items, Ultra's labels agreed with our
judgments in every case, so we used it for the full offline labeling pass.
On a held-out development bank, the resulting accuracies are about $96.7\%$
for Ultra, $84.5\%$ for Llama~3B, and $53.7\%$ for Llama~1B.
Tokens spent on this offline labeling are excluded from the step cost used by
$\pi$.
From GSM8K \texttt{main}, $1000$ official-train items form a fixed ability map
(empirical success rates by difficulty and solver).
The remaining $6473$ training items are used to train $\pi$, and the official
test set ($1319$) is held out.
Cached answers and correctness labels are reused throughout training so that
policies facing the same question see the same solver outputs.
In particular, we do not call the solvers or a live labeler at every RL step:
$\pi$ reads the already-cached offline answers and labels, which makes the
GSM8K coordinator experiments reproducible.

\paragraph{Training protocol, seeds, and baseline.}
An episode contains $32$ questions.
A question may be retried up to three times; after three failures the
environment advances with penalty $-1$.
We use $\gamma=0.99$, cost weight $\lambda=100$, and $1000$ training episodes.
Each GSM8K study reports mean$\pm$std over $20$ independent runs with seeds
$\{0,1,\ldots,19\}$; for seed $s$ we set the environment, NumPy, and PyTorch
RNGs from $s$ before training.
Figures in the main text use the ability-map observation; a paired no-map run
gives nearly the same aggregates and is summarized in the tables.
Both GSM8K studies use a handoff fee rate $\alpha=0.05$ (defined per subsection).
The \emph{always-Ultra} baseline sends every question to Ultra with no handoff
to $B$ or $C$, under the same caches and the same cost accounting as the
learned policy in that subsection.

\paragraph{Hyper-parameters and selection.}
Table~\ref{tab:gsm8k-dqn-hparams} lists the final coordinator settings shared by
both GSM8K studies (delegation uses $189$ discrete joint actions instead of
$3$).
During development we fixed the environment protocol early (episode length
$32$, three retries, $\lambda=100$, $\alpha=0.05$, $\gamma=0.99$) after small
sanity runs and retained a standard DQN schedule
(Adam learning rate $10^{-3}$, $\varepsilon$ decay $1\!\rightarrow\!0.05$ over
$20{,}000$ gradient steps, replay buffer $80{,}000$, target sync every $200$
updates, hidden size $128$).
We did not run a large grid search; all reported seeds use this single final
configuration.

\begin{table}[t]
\centering
\scriptsize
\setlength{\tabcolsep}{3pt}
\begin{tabular}{lc}
\toprule
Hyper-parameter & Value \\
\midrule
Optimizer / learning rate & Adam / $10^{-3}$ \\
Discount $\gamma$ & $0.99$ \\
Cost weight $\lambda$ / handoff fee $\alpha$ & $100$ / $0.05$ \\
Episodes $\times$ questions/episode & $1000\times 32$ \\
Max attempts per question / fail penalty & $3$ / $-1$ \\
Hidden size / batch / replay buffer & $128$ / $32$ / $80{,}000$ \\
Target sync / $\varepsilon$ schedule & every $200$; $1\!\rightarrow\!0.05$ over $20$k steps \\
Seeds & $0$--$19$ \\
\bottomrule
\end{tabular}
\caption{Final DQN / environment hyper-parameters for the GSM8K management and delegation studies.}
\label{tab:gsm8k-dqn-hparams}
\end{table}

\paragraph{Compute.}
Coordinator training ran on NVIDIA A100-SXM4-80GB GPUs under Linux, using
Python~$3.10$, PyTorch~$2.5.1$ (CUDA~$12.1$ wheels), NumPy, Gymnasium, and
HuggingFace \texttt{transformers} for local Llama solves and caches.
Each seed used a single A100; offline Ultra labeling used API calls and is
excluded from the step cost charged to $\pi$.

\paragraph{Management reward terms.}
In Subsection~\ref{subsec:exp21}, the shared step reward is
$$
r \;=\; R_{\mathrm{corr}}
\;-\; \lambda\, C_{\mathrm{solve}}(a_{\mathrm{exec}})
\;-\; \lambda\, C_{\mathrm{edge}}(n_e,a_{\mathrm{exec}})
\;+\; R_{\mathrm{fail}}.
$$
Here $R_{\mathrm{corr}}\in\{0,1\}$ is one if the chosen solver's cached answer
is correct under the offline Ultra labels above, and zero otherwise;
$C_{\mathrm{solve}}(a_{\mathrm{exec}})$ is that solver's synthetic token cost;
$n_e$ is the number of handoffs on the chosen route ($0$, $1$, or $2$);
$C_{\mathrm{edge}}(n_e,a_{\mathrm{exec}})=n_e\,\alpha\,C_{\mathrm{solve}}(a_{\mathrm{exec}})$
with $\alpha=0.05$ is a small fee for each handoff; $\lambda=100$ scales the
dollar costs into the reward; and $R_{\mathrm{fail}}=-1$ only after three failed
attempts on the same question.
In short, we pay for answering and also pay a little for each pass down the
chain.

\paragraph{Delegation cost accounting.}
In Subsection~\ref{subsec:exp22}, we count the attempt as successful if any
answering agent is correct under the offline Ultra labels, and we charge the
sum of those agents' solve costs.
In addition, each kickoff handoff and each peer pass pays a small edge fee of
$\alpha=0.05$ times the recipient's expected solve cost, matching the handoff
fee used in the management study.
Always-Ultra is evaluated in the same graph environment, so it also pays the
kickoff edge fee when Ultra is started.

\paragraph{Code and entrypoints.}
Listings~\ref{lst:preprocess}--\ref{lst:train} summarize the offline pipeline
and DQN entrypoints; the full scripts are in the code-and-data supplementary ZIP.

\begin{listing}[t]
\begin{lstlisting}[language=Python,basicstyle=\ttfamily\scriptsize]
# Offline GSM8K preprocessing (conceptual; full scripts in code release)
items = load_gsm8k_main()                 # Cobbe et al., GSM8K main
label, train, test = split(items,
    label_n=1000, seed=0)                 # test = official 1319
for solver in [Ultra, Llama3B, Llama1B]:
    cache[solver] = answer_once(label + train + test)  # store tokens
for (q, ans) in all_cached_pairs(cache):
    judge[q, ans] = Ultra_marks_vs_gold(q, ans)        # offline only
ability_map = success_rates(label, judge)              # difficulty x solver
# RL samples from train caches; judge tokens excluded from step cost
\end{lstlisting}
\caption{Offline preprocessing used before both GSM8K RL studies.}
\label{lst:preprocess}
\end{listing}

\begin{listing}[t]
\begin{lstlisting}[language=bash,basicstyle=\ttfamily\scriptsize]
# Coordinator training entrypoints (seed S in 0..19; PYTHONPATH=src)
python experiments/exp21_management/run_train.py \
  --config configs/exp21/cache_dqn_v1_run3.yaml --seed S
python experiments/exp22_delegation/run_train.py \
  --config configs/exp22/dqn_v2_edge_fee_with_map.yaml --seed S
# Warm/label helpers: run_warm_cache.py, run_label.py,
# run_build_ability_map_local.py (same experiments/ tree)
\end{lstlisting}
\caption{Training and preprocessing entrypoints for the GSM8K studies.}
\label{lst:train}
\end{listing}

\end{document}